\newtheorem{theorem}{Theorem}
\newtheorem{lemma}{Lemma}
\newtheorem{corollary}{Corollary}
\newtheorem{assumption}{Assumption}
\newlength\myindent
\def\BibTeX{{\rm B\kern-.05em{\sc i\kern-.025em b}\kern-.08em
    T\kern-.1667em\lower.7ex\hbox{E}\kern-.125emX}}
\newcommand{\E}[1]{\mathbb{E}\left[{#1}\right]}
\begin{document}

\title{ABS: Adaptive Bounded Staleness Converges Faster and Communicates Less}
\author{Qiao Tan, Yun Gao}

\maketitle

\begin{abstract}
The convergence time and communication rounds are critical performance metrics in distributed learning with a parameter-server (PS) setting. Local stochastic gradient descent (SGD) is a widely used method for improving communication efficiency in distributed learning. However, synchronous local SGD can experience substantial slowdowns, forcing stragglers to perform multiple local updates and the other nodes to keep waiting. Asynchronous methods are immune to these slowdowns but can suffer from gradient staleness, leading to suboptimal results or even divergence. To address these challenges, we propose a novel asynchronous strategy named adaptive bounded staleness (ABS). ABS leverages two key enablers. Firstly, the number of workers that the PS waits for per round for gradient aggregation is adaptively selected to strike a balance between straggling and staleness. Secondly, workers with relatively high staleness are prompted to initiate a new round of computation, alleviating the negative effects of staleness. Simulation results demonstrate that ABS outperforms state-of-the-art schemes in terms of wall-clock time and communication rounds.
\end{abstract}


\section{Introduction}
As the size of datasets and the complexity of machine learning tasks continue to grow exponentially, stochastic gradient descent (SGD) methods have replaced traditional gradient descent (GD) methods and become the workhorse for large-scale machine learning \cite{bottou2010large}. To further accelerate the learning process, parallelism is leveraged through distributed implementation with multiple worker nodes, known as distributed learning \cite{r4}.

One commonly used setting in distributed SGD is the parameter-server (PS) architecture \cite{r1}, which consists of a central parameter server and distributed worker nodes. In this setting, workers download the latest global model from the PS, perform local computations, and upload the results to the PS to update the global model. This process is repeated until a convergence criterion is met. The primary schemes employed in the PS framework can be categorized into synchronous and asynchronous methods.

In synchronous SGD (SSGD) \cite{r2,r3,r9}, the server waits for all workers to complete their computations before updating the global model. Numerous studies have shown that SSGD can achieve high speedup and accuracy \cite{r1,r2,r3,r9}. However, the wall-clock time performance of SSGD is severely impacted by slow workers, also known as stragglers \cite{r4}. To address this issue, the $K$-sync SGD algorithm was proposed \cite{r5}, where the PS only waits for the $K$ fastest workers. Asynchronous methods, discussed in the following paragraph, were developed to further expedite the learning process.


Asynchronous SGD (ASGD) completely eliminates the influence of stragglers by allowing the PS to update the global model immediately after any worker finishes its local computation and uploads it to the server \cite{agarwal2011distributed}. It is evident that the time elapsed per training round in ASGD is significantly reduced compared to SSGD. However, the robustness to stragglers introduces the issue of staleness, where the gradient used to update the global parameter may not be consistent with the one used to compute it. This results in instability during the training process and an increased error floor compared to SSGD schemes \cite{r5,l4}. To mitigate these issues, Staleness-Aware (SA) methods were proposed \cite{SA}, which penalize the step size of stale gradients linearly based on their delay. This approach has been adopted by subsequent works \cite{SA2,SA3} and has become a common method for handling stale gradients. Gap-Aware (GA) methods were also proposed \cite{GA}, which penalize stale gradients linearly based on the gap between the worker model and server model. Additionally, the $K$-async SGD scheme was introduced \cite{r8} to reduce the impact of gradient staleness. $K$-async SGD is the asynchronous version of the $K$-sync SGD method, where the PS waits for the $K$ fastest workers to update the global model while the remaining workers continue computing and their stale gradients are used in subsequent rounds.

However, the issue of gradient staleness still poses a threat to algorithm convergence, as there is a risk of aggregating excessively stale gradients that can negatively impact performance. To tackle this challenge, we propose a novel asynchronous strategy called Adaptive Bounded Staleness (ABS). ABS aims to adaptively bound the staleness of gradients, hence its name. In ABS, similar to AdaSync, the number of workers that the server waits for during different stages of training adjusts adaptively based on the learning process. This approach strikes a balance between synchronous and asynchronous methods. Additionally, at the beginning of each training round, workers with a staleness exceeding a threshold value are required to restart their computation, effectively bounding the staleness. The threshold also adapts dynamically as the tolerance for staleness decreases throughout the training process. To further reduce communication overhead between workers and the PS, we incorporate the local SGD method in training. Unlike synchronizing models across all workers, local SGD allows each worker's model to evolve independently, with occasional model averaging. This approach has gained popularity in training deep neural networks \cite{l1, l2, l3, l4} and federated learning \cite{fedavg} due to its significant improvement in communication efficiency.

Furthermore, we provide a theoretical analysis of the convergence rate of the ABS algorithm under a non-convex condition. Simulation results demonstrate the advantages of ABS over state-of-the-art schemes in terms of both wall-clock time and communication rounds.

\textbf{Notation.} $\mathbb{R}$ denotes the real number fields; $\nabla f$ denotes the gradient operation for function $f$; $\bigcup$ denotes the union of sets. For sets $\mathcal{A}$ and $\mathcal{B}$,  $\mathcal{A}\subseteq\mathcal{B}$ represents that set $\mathcal{A}$ is a subset of set $\mathcal{B}$; $|\mathcal{A}|$ denotes the size of set $\mathcal{A}$.

\section{Problem Formulation}
\subsection{System Model}

We start by introducing the problem formulation under study. Our objective is to minimize the following empirical risk function: 
\begin{equation}\label{E2}
  F(\mathbf{w}) = \frac{1}{M} \sum_{m=1}^{M} f(\mathbf{w};\xi_{m}),
\end{equation}
where $\mathbf{w}\in\mathbb{R}^d$ is the $d$-dimensional model parameter to be optimized; and $f(\mathbf{w};\xi_{m})$ is a smooth loss function of parameter $\mathbf{w}$ with sample $\xi_m$ from dataset $\mathcal{D}=\{\xi_i\}_{i=1}^M$.

To implement the distributed local-SGD-based framework, we consider a PS-based architecture consisting of $N$ workers, denoted by the set $\mathcal{N}=\{1,..., N\}$. Each worker $n\in \mathcal{N}$ maintains a local dataset $\mathcal{D}_n$ drawn independently from the global dataset $\mathcal{D}$ without replacement, i.e., we have $\mathcal{D}=\bigcup_{n\in\mathcal{N}}\mathcal{D}_n$.

In order to accelerate the training process, an asynchronous aggregation mechanism of local SGD is investigated. More precisely, in the first round, each worker $n\in \mathcal{N}$ starts to perform $U$ local updates by setting the local model as $\mathbf{w}_n^{0,0}=\mathbf{w}^0$, where $\mathbf{w}^0$ is the initialized global model. Then, at each round $t\geq 0$, to update the global model $\mathbf{w}^{t+1}$, the PS aggregates the first $K^t$ local updates from the workers in the set $\mathcal{K}^t$. Note that we have $|\mathcal{K}^t| = K^t$. As a result, the local update rule of each worker $n$ can be given as
\begin{equation}
    \mathbf{w}_n^{t,u+1} = \mathbf{w}_n^{t,u} - \eta_t \frac{1}{B}\sum_{b=1}^B\nabla f(\mathbf{w}_n^{t,u}; \xi_{n,b}^{t, u}),\label{local update}
\end{equation}
for $u=0, ..., U-1$, where $U$ is the number of local iterations; $\eta_t$ is the stepsize; $B$ denotes the batch size; $\xi_{n,b}^{t, u}$ is drawn independently across all workers, local iterations and training rounds; and $\mathbf{w}_n^{t,u}$ is the local model of worker $n$ at local iteration $u$. Due to the fact that the gradients returned by each worker might be computed at a stale value of the global parameter $\mathbf{w}^{t}$, we use the variable $\tau_n^t$ to denote the staleness information at round $t$, i.e., work $n$ starts the computation with the local model given as $\mathbf{w}_n^{t-\tau_n^t,0}=\mathbf{w}^{t-\tau_n^t}$. Note that when $\tau_n^t=0$, it means that worker $n$ feeds back to update $\mathbf{w}^{t+1}$ at round $t$ after computing with the received global model $\mathbf{w}^{t}$. Moreover, each work $n\in \mathcal{K}^t$ sets the local model as $\mathbf{w}_n^{t+1,0}=\mathbf{w}^{t+1}$ to perform new local updates. As a result, the global model can be updated as 
\begin{equation}
     \mathbf{w}^{t+1}=\mathbf{w}^t-\frac{\eta_t}{K^t}\sum_{k \in \mathcal{K}^t } \sum_{u=0}^{U-1}\frac{1}{B}\sum_{b=1}^B\nabla f(\mathbf{w}_k^{t-\tau_k^t,u}; \xi_{k,b}^{t-\tau_k^t, u}).\label{update}
\end{equation}

\subsection{Background}
In this subsection, we extend the system model to incorporate state-of-the-art approaches as baseline algorithms. These baseline algorithms will serve as a point of comparison against our proposed method, which will be discussed in Section IV.

\textbf{Synchronous SGD.} Synchronous SGD (SSGD) is one of the most commonly adopted methods in distributed SGD \cite{r1}. In SSGD, the PS waits for all workers to finish their gradient computation before updating the global model, i.e., we have $K^t=N$, $U=1$, and $\tau_k^t=0$. Nevertheless, as stated before, though with stable performance and high accuracy, straggling becomes the main bottleneck for synchronous methods, which induces the development of asynchronous schemes.

\textbf{Asynchronous SGD.} 
In the primitive asynchronous SGD (ASGD) method \cite{agarwal2011distributed}, the PS updates the global model as soon as it receives the gradient from any worker, with $K^t=1$ and $U=1$. Although ASGD is robust to stragglers and converges faster than SSGD, it often suffers from a higher error floor due to gradient staleness. To enhance performance, one commonly used method is SA \cite{SA}, which penalizes the step size of stale gradients linearly based on their delay. Additionally, AdaSync \cite{AdaSync} proposes an adaptive approach that dynamically adjusts the number $K^t$ of workers the PS waits for to update the global parameters. However, the unbounded gradient staleness in asynchronous methods may lead to performance fluctuations. Hence, we introduce the ABS algorithm in the following section, which adaptively bounds the gradient staleness to mitigate such fluctuations.

\textbf{Local SGD.}
In both synchronous SGD and asynchronous SGD, if the workers are permitted to perform local updates before uploading their parameters, they employ the well-established local SGD method \cite{l2} \cite{l3} \cite{l4}, i.e., we have $U>1$. In comparison to $U=1$, local SGD allows the models to evolve locally on each worker node and only performs occasional model averaging.  This training approach has gained widespread adoption in recent years for training deep neural networks \cite{l1, l2, l3, l4} and in federated learning \cite{fedavg}, primarily due to its substantial improvement in communication efficiency. Taking into account both training speed and communication overhead, in ABS, we incorporate local training in each worker to demonstrate its performance.

\section{ABS: Adaptive Bounded Staleness}

In this section, we first introduce the asynchronous-based adaptive bounded staleness (ABS) method. The main idea of ABS is to adaptively bound the gradient staleness and adjust the number of workers that the PS waits for. By doing so, ABS achieves a delicate equilibrium between training speed and communication load. We then outline the key principles of ABS and discuss the analysis of algorithm hyperparameters that influence its performance.

\textit{Algorithm description:} 
To elaborate on the staleness of all the $N$ workers at round $t$, we define a $N$-length vector $\boldsymbol{\tau}_N^t=[\tau_1^t,\cdots, \tau_N^t]$. This vector is maintained by the parameter server (PS) and initialized with zero values. Each staleness element $\tau_n^t$ represents the number of consecutive rounds that worker $n$ has not received the global model from the PS. 
To identify the workers with high staleness, we define a threshold as $\tau_{max}^t$. At round $t$, if $\tau_n^t > \tau^t_{max}$, worker $n$ is forced to receive the latest global model from the PS and initiate a new round of computation. As the learning process advances, the system's tolerance for staleness decreases. Therefore, we dynamically adjust $\tau_{\text{max}}^t$ given as
\begin{equation}
    \tau^t_{max} = \max\Big\{1, \frac{N}{K^t} + a\Big\},
\end{equation}
where $a$ is some constant. In other words, as the number of workers to wait increases, the staleness threshold decreases gradually during the training process to control the staleness of the gradients. Additionally, we set the minimum value of $\tau^t_{\text{max}}$ to be 1 to ensure a minimum level of freshness. Moreover, following a similar approach as described in \cite{AdaSync}, we dynamically increase $K^t$ based on the learning process, given as
\begin{equation}
    \label{K}
    K^t = K^{0} \sqrt{\frac{f(\mathbf{w}^{0})}{f(\mathbf{w}^t)}},
\end{equation}
where $K^0$ is the number of workers we wait for in the initial round, $f(\mathbf{w}^{0})$ is the loss of the initial global model $\mathbf{w}^{0}$ and $f(\mathbf{w}^{t})$ is the loss of global model at iteration $t$. 
In general, at iteration $t$, we calculate the average loss of $K^t$ workers who update the PS and consider it as the loss of the global model. Therefore, the update of $\tau_n^t$ at each round $t$ follows the formula
\begin{equation}
    \!\!\!\!\tau_n^{t+1}=
    \begin{cases}
      0, ~~~~~~\text{if worker $n \in \mathcal{K}^t$ or $\tau_n^t > \tau^t_{max}$,\!\!} \\
      \tau_n^t+1, \text{otherwise}.
    \end{cases}\label{tau}
\end{equation}

As a start, the PS broadcasts the initial global model $\mathbf{w}^0$ to all the workers in the set $\mathcal{N}$ for computation. At each round $t$, the PS leverages the uploading information of the fastest $K^t$ workers to update the global model via (\ref{update}), where $K^t$ changes according to equation (\ref{K}). After the PS updates the global model  $\mathbf{w}^t$ and the staleness vector $\boldsymbol{\tau}_N^t$, the $K^t$ fastest workers and any worker with $\tau_n^t> \tau^t_{max}$ receive the latest model $\mathbf{w}^{t+1}$, and their corresponding staleness factor $\tau_n^{t}$ are set to 0. The procedure repeats until some convergence criterion is satisfied. The whole procedure is summarized in Algorithm \ref{a1}.

\begin{algorithm}[t!]
    \caption{ABS Algorithm}
    \label{a1}
    \begin{algorithmic}[1]
    \STATE Initializes $K^0$, $\tau^0 = \frac{N}{K^0} + a$, $\mathbf{w}^{0}$, $t=0$
    \FOR{$t = 0, 1,\dots, T$}
    \STATE \textbf{Server executes:}
    \begin{ALC@g}
    \STATE Waits for the uploads from the $K^t$ fastest workers
    \STATE Updates the global model $\mathbf{w}^t$ via (\ref{update})
    \STATE Update the number $K^t$ of workers to wait for
    via (\ref{K})
    \STATE Updates the ages of all $N$ workers via (\ref{tau})
    \STATE Sends $\mathbf{w}^t$ to the $K^t$ fastest workers and workers with $\tau_n^t>\tau^t_{max}$, sets their ages $\tau_n^t$ to 0
    \STATE Sets $t=t+1$
    \end{ALC@g}

\STATE \textbf{Worker executes:}
    \begin{ALC@g}
    \FOR{worker $n \in \mathcal{N}$}
    \STATE Receive the latest global model $\mathbf{w}^t$ from the PS and sets $\mathbf{w}_n^{t,0}=\mathbf{w}^t$
    \FOR{$u = 0, 1, \dots, U$}
    \STATE Updates its local model via (\ref{local update})
    \IF{receives new $\mathbf{w}^t$}
    \STATE Sets $\mathbf{w}_n^{t,0}=\mathbf{w}^t$, then restarts local computation
    \ENDIF
    \ENDFOR 
    \STATE Sends the accumulative gradient $\mathbf{w}_n^{t,0}-\mathbf{w}_n^{t,U}$ and average local loss to the PS
    \ENDFOR
    \end{ALC@g}
    \ENDFOR
    \end{algorithmic}
    \end{algorithm}

\textit{Example.} Consider $N=8$ workers training a CNN model with $U=10, B=32, \eta = 0.1$, and different values of the threshold $\tau_{max}$ of gradient staleness. As illustrated in Figure \ref{th_diff}, utilizing limited bounded gradient staleness leads to significantly improved performance, even with a small value of $K$, compared to the unbounded case corresponding to $\tau_{max}=\infty$. Furthermore, we observe that when $\tau_{\text{max}}=1$, the convergence rate is slightly slower. This can be attributed to the fact that discarding too many computations can impede the progress of convergence. To address this issue, ABS directly controls the gradient staleness. It leverages relatively larger gradient staleness to accelerate the training process during the early stages, while selectively receiving gradients with smaller staleness to ensure a lower error floor later on. By effectively managing the gradient staleness, ABS achieves a balance between rapid initial progress and attaining a desirable error floor.

\begin{figure}[!htbp]
    \centering
    \includegraphics[width=9.5cm]{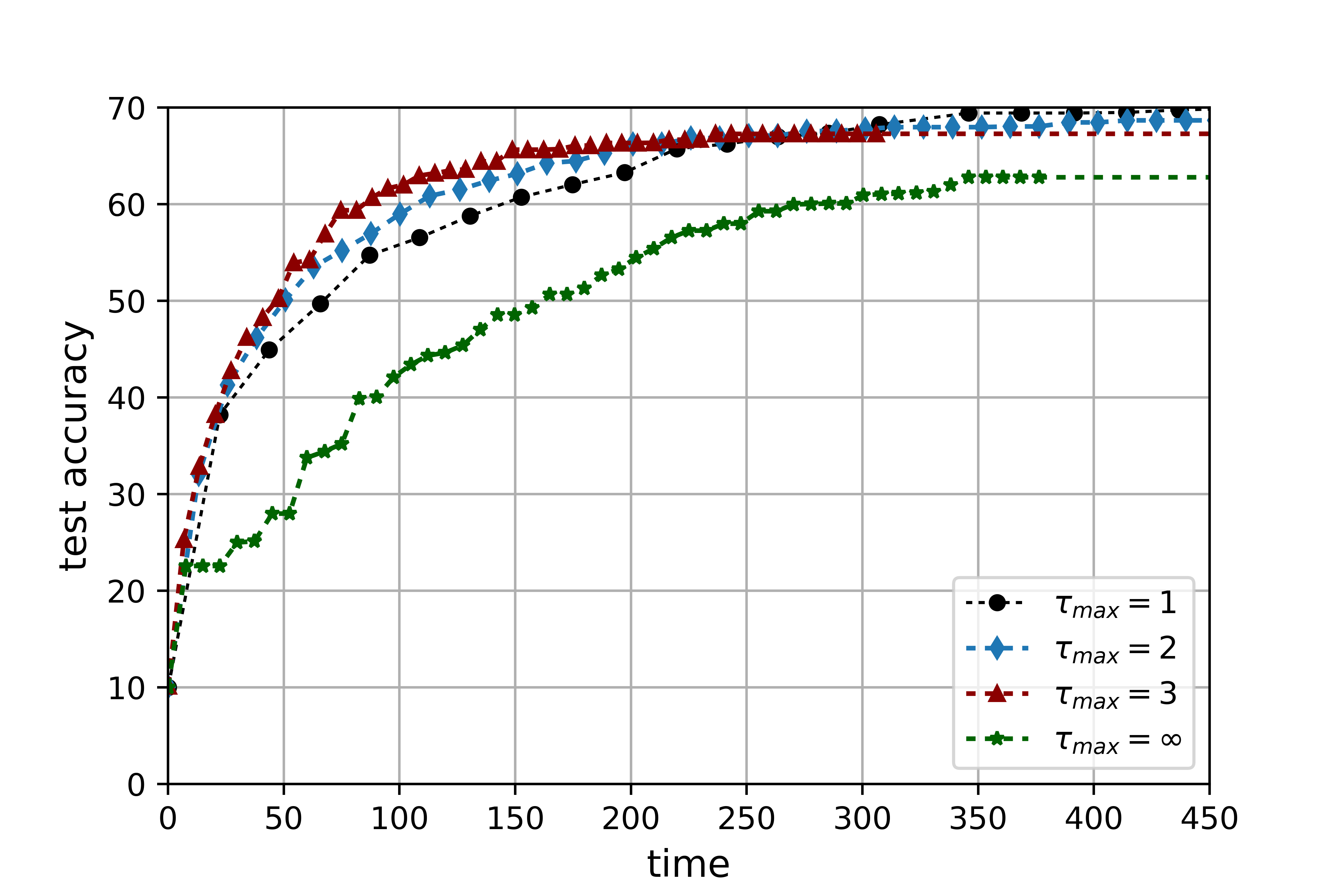}
     \caption{The performance of $K$-async ($K$=1) with $N=8$ and different values of the threshold $\tau_{max}$ of gradient staleness.} 
    \label{th_diff}
\end{figure}

\section{Convergence analysis}
In this section, we establish the convergence of the proposed ABS algorithm 
with a general (not necessarily convex) objective function. Our analysis is based on the following assumptions, which are widely adopted in related works such as \cite{AdaSync}.



\begin{assumption}\label{assum1} 
$F(\mathbf{w})$ is an L-smooth function, i.e., 
\begin{equation}
    \left \|\nabla F(\mathbf{w}_1) - \nabla F(\mathbf{w}_2) \right \|_2 \leq L\left \| \mathbf{w}_1 - \mathbf{w}_2\right \|_2 \quad \forall \mathbf{w}_1, \mathbf{w}_2.
\end{equation}
\end{assumption}

\begin{assumption} \label{assum2} 
The stochastic gradient is an unbiased estimate of the true gradient:
\begin{equation}
    \mathbb{E}_\xi[\nabla f(\mathbf{w}, \xi)] = \nabla F(\mathbf{w}).\label{assum2}
\end{equation}
\end{assumption}

\begin{assumption}\label{assum3} 
Let $g(\mathbf{w}_k^{t,u},\xi_k^{t,u})= \frac{1}{B}\sum_{b=1}^B\nabla f(\mathbf{w}_k^{t,u}, \xi_{k,b}^{t, u})$ be the batch gradient given $\mathbf{w}_k^{t,u}$ at iteration $t$ in local step $u$. We also assume that the variance of the stochastic batch gradient is also bounded as follows:
\begin{equation}
    \mathbb{E}_\xi[\left \| g(\mathbf{w}^{t,u}_k,\xi^{t,u}_k)-\nabla F(\mathbf{w}^{t,u}_k) \right \|_2^2] \leq \frac{\sigma^2}{m} + \frac{M_G}{m}\left \| \nabla F(\mathbf{w}^{t,u}_k) \right \|_2^2,
\end{equation}
where $\sigma^2$, $m > 0$ and $M_G > 0$ are some constants.
\end{assumption}



\begin{assumption}\label{assum4} 
With local training, we assume that for some  $\gamma < 1$, $\forall k$ and $\forall u$,
\begin{equation}
    \mathbb{E}\Big[\left \| \nabla F(\mathbf{w}^t)- \nabla F(\mathbf{w}_{k}^{t-\tau_k^t, u}) \right \|_2^2 \Big] \leq \gamma \mathbb{E}[\left \| \nabla F(\mathbf{w}^t)\right \|_2^2].
\end{equation}
\end{assumption}
Based on our assumptions, we have the following lemmas. 
\begin{lemma}
\label{lem1}
Let $g(\mathbf{w}^{t-\tau_k^t,u}_k,\xi^{t-\tau_k^t,u}_k)= \frac{1}{B}\sum_{b=1}^B\nabla f(\mathbf{w}_k^{t-\tau_k^t,u}; \xi_{k,b}^{t-\tau_k^t, u})$ be the batch gradient given $\mathbf{w}^{t-\tau_k^t,u}_k$ at iteration $t-\tau^t_k$ in local step u, and use the symbol $\mathbf{v}^{t,u}_k =  g(\mathbf{w}^{t-\tau_k^t,u}_k, \xi^{t-\tau_k^t,u}_k)$ to simply represent it. Then, we have
\begin{align}
    \E{\left \|\mathbf{v}^{t,u}_k - \nabla F(\mathbf{w}^t) \right \|^2_2} = \E{\left\|\mathbf{v}^{t,u}_k \right\|^2_2} - \E{\left\|\nabla F(\mathbf{w}^{t-\tau_k^t,u}_k)\right \|^2_2}+ \E{\left\|\nabla F(\mathbf{w}^t) - \nabla F(\mathbf{w}^{t-\tau_k^t,u}_k)\right\|_2^2}.
\end{align}
\begin{proof}
    The proof is presented in Appendix \ref{A1}.
\end{proof}
\end{lemma}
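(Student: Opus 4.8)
The plan is to read the claimed identity as a bias--variance decomposition of the stale stochastic gradient $\mathbf{v}^{t,u}_k$ around the \emph{current} true gradient $\nabla F(\mathbf{w}^t)$, separating it into the pure batch-noise variance at the stale iterate and the deterministic staleness gap. First I would insert the true gradient at the stale iterate, writing $\mathbf{v}^{t,u}_k - \nabla F(\mathbf{w}^t) = \big(\mathbf{v}^{t,u}_k - \nabla F(\mathbf{w}^{t-\tau_k,u}_k)\big) + \big(\nabla F(\mathbf{w}^{t-\tau_k,u}_k) - \nabla F(\mathbf{w}^t)\big)$, and expand $\|\cdot\|_2^2$ into the two squared-norm terms together with twice their inner product.

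Next I would show that the cross term vanishes in expectation. The point is that $\mathbf{v}^{t,u}_k = g(\mathbf{w}^{t-\tau_k,u}_k,\xi^{t-\tau_k,u}_k)$ is built from the batch sample $\xi^{t-\tau_k,u}_k$, which is drawn independently of all the randomness that determines the iterates $\mathbf{w}^{t-\tau_k,u}_k$ and $\mathbf{w}^t$. Conditioning on a $\sigma$-algebra $\mathcal{F}$ that makes both iterates measurable while keeping $\xi^{t-\tau_k,u}_k$ fresh, the unbiasedness assumption (Assumption 2), after averaging the per-sample gradients over the batch, gives $\E{\mathbf{v}^{t,u}_k \mid \mathcal{F}} = \nabla F(\mathbf{w}^{t-\tau_k,u}_k)$. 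Hence the first factor of the inner product has zero conditional mean and the second, $\nabla F(\mathbf{w}^{t-\tau_k,u}_k) - \nabla F(\mathbf{w}^t)$, is $\mathcal{F}$-measurable, so the tower property forces the cross term to zero.

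Then I would treat the remaining noise term with the standard variance identity $\E{\|X-\mu\|_2^2 \mid \mathcal{F}} = \E{\|X\|_2^2 \mid \mathcal{F}} - \|\mu\|_2^2$, applied with $X = \mathbf{v}^{t,u}_k$ and $\mu = \E{X \mid \mathcal{F}} = \nabla F(\mathbf{w}^{t-\tau_k,u}_k)$. Taking total expectation yields $\E{\|\mathbf{v}^{t,u}_k - \nabla F(\mathbf{w}^{t-\tau_k,u}_k)\|_2^2} = \E{\|\mathbf{v}^{t,u}_k\|_2^2} - \E{\|\nabla F(\mathbf{w}^{t-\tau_k,u}_k)\|_2^2}$. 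Substituting this back into the expansion produces exactly the three terms in the statement.

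The step I expect to require the most care is pinning down the filtration $\mathcal{F}$: one needs both $\mathbf{w}^{t-\tau_k,u}_k$ and $\mathbf{w}^t$ to be $\mathcal{F}$-measurable while $\xi^{t-\tau_k,u}_k$ stays independent and fresh. This is legitimate because $\mathbf{w}^t$ is aggregated only from gradients uploaded strictly before worker $k$'s current contribution, so it cannot depend on $\xi^{t-\tau_k,u}_k$, while $\mathbf{w}^{t-\tau_k,u}_k$ depends only on $\mathbf{w}^{t-\tau_k}$ and the earlier local samples $\xi^{t-\tau_k,0}_k,\dots,\xi^{t-\tau_k,u-1}_k$. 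The extra wrinkle is that the staleness index $\tau_k$ is itself random and scheduling-dependent in ABS, so I would first condition on the event fixing $\tau_k$ (equivalently, on the history up to the round that generated $\mathbf{w}^{t-\tau_k}$) before invoking conditional unbiasedness. Once the filtration is fixed, the rest is routine expansion.
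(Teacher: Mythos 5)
Your proposal is correct and follows essentially the same route as the paper: the same insertion of $\nabla F(\mathbf{w}^{t-\tau_k,u}_k)$, the same vanishing cross term via conditional unbiasedness, and the same variance identity $\E{\|X\|_2^2}-\E{\|\E{X\mid\cdot}\|_2^2}$ for the noise term. Your explicit treatment of the conditioning $\sigma$-algebra is in fact more careful than the paper's own derivation, which informally factors the expectation of the cross term into a product of expectations.
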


\begin{lemma}
\label{lem2}
Let $g(\mathbf{w}^{t-\tau_k^t,u}_k,\xi^{t-\tau_k^t,u}_k)= \frac{1}{B}\sum_{b=1}^B\nabla f(\mathbf{w}_k^{t-\tau_k^t,u}, \xi_{k,b}^{t-\tau_k^t, u})$. If the variance of the stochastic batch gradient is bounded as
\begin{equation}
    \mathbb{E}_\xi[\left \| g(\mathbf{w}^{t-\tau_k^t,u}_k,\xi^{t-\tau_k^t,u}_k)-\nabla F(\mathbf{w}^{t-\tau_k^t,u}_k) \right \|_2^2] \leq \frac{\sigma^2}{m} + \frac{M_G}{m}\left \| \nabla F(\mathbf{w}^{t-\tau_k^t,u}_k) \right \|_2^2, \nonumber
\end{equation}
Then the variance of the sum of the stochastic batch gradient is also bounded as follows:
\begin{equation}
    \mathbb{E}_{\xi}[||\sum_{k \in \mathcal{K}^t}g(\mathbf{w}^{t-\tau_k^t,u}_k,\xi^{t-\tau_k^t,u}_k) ||_2^2] \leq \frac{K^t \sigma^2}{m} + (\frac{M_G}{m} + 1)\sum_{k \in \mathcal{K}^t}||\nabla F(\mathbf{w}^{t-\tau_k^t,u}_k)||_2^2
\end{equation}
\begin{proof}
    The proof is presented in Appendix \ref{A1}.
\end{proof}
\end{lemma}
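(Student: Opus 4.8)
The plan is to combine the independence of the stochastic gradients across workers (built into the system model, where $\xi_{k,b}^{t-\tau_k,u}$ is drawn independently across workers) with the unbiasedness in Assumption \ref{assum2}. Abbreviate $g_k:=g(\mathbf{w}^{t-\tau_k,u}_k,\xi^{t-\tau_k,u}_k)$ and $\nabla F_k:=\nabla F(\mathbf{w}^{t-\tau_k,u}_k)$, so that $\mathbb{E}_\xi[g_k]=\nabla F_k$. The first step is to expand
\begin{equation}
\Big\|\sum_{k=1}^{K^t}g_k\Big\|_2^2=\sum_{k=1}^{K^t}\|g_k\|_2^2+\sum_{k\neq j}g_k^{T}g_j,\nonumber
\end{equation}
and pass to $\mathbb{E}_\xi$; by independence across workers each off-diagonal expectation factorizes as $\mathbb{E}_\xi[g_k^{T}g_j]=\nabla F_k^{T}\nabla F_j$.

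For the diagonal terms I would use the bias--variance identity $\mathbb{E}_\xi[\|g_k\|_2^2]=\|\nabla F_k\|_2^2+\mathbb{E}_\xi[\|g_k-\nabla F_k\|_2^2]$ and plug in the per-worker variance hypothesis to obtain, for every $k$,
\begin{equation}
\mathbb{E}_\xi[\|g_k\|_2^2]\le\frac{\sigma^2}{m}+\Big(\frac{M_G}{m}+1\Big)\|\nabla F_k\|_2^2.\nonumber
\end{equation}
Summing this over $k=1,\dots,K^t$ reproduces precisely the right-hand side claimed in the lemma, so the entire statement reduces to showing that the off-diagonal contribution is non-positive.

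This off-diagonal term is where the only real obstacle lies. Collecting the cross terms gives $\sum_{k\neq j}\nabla F_k^{T}\nabla F_j=\|\sum_k\nabla F_k\|_2^2-\sum_k\|\nabla F_k\|_2^2$, which is not guaranteed to be non-positive for arbitrary models, so the inequality $\mathbb{E}_\xi[\|\sum_k g_k\|_2^2]\le\sum_k\mathbb{E}_\xi[\|g_k\|_2^2]$ does not follow for free. The step I expect to need is to discard these inter-worker correlations, which is exactly the place where the argument invokes independence in the intended sense, i.e.\ treating the gradients of distinct workers as uncorrelated so that $\mathbb{E}_\xi[g_k^{T}g_j]$ is dropped for $k\neq j$. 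Under that reading the off-diagonal sum is removed and the diagonal bounds assemble directly into the stated inequality; making this decorrelation assumption explicit (or, more conservatively, replacing the target by the looser $\frac{(K^t)^2\sigma^2}{m}+K^t(\frac{M_G}{m}+1)\sum_k\|\nabla F_k\|_2^2$ obtained from Jensen's inequality applied to the averaged update) is the one judgment call in the proof.
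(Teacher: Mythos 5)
Your decomposition is, after rearrangement, the same as the paper's. The paper writes $\sum_k g_k=\sum_k(g_k-\nabla F_k)+\sum_k\nabla F_k$, kills the cross term between the two blocks by unbiasedness, bounds $\mathbb{E}_\xi\big[\|\sum_k(g_k-\nabla F_k)\|_2^2\big]$ by $\sum_k\big(\tfrac{\sigma^2}{m}+\tfrac{M_G}{m}\|\nabla F_k\|_2^2\big)$ via exactly the conditional-expectation argument you describe for the centered off-diagonal terms, and is then left with the term $\mathbb{E}\big[\|\sum_k\nabla F_k\|_2^2\big]$ --- which is your off-diagonal obstruction in disguise, since $\|\sum_k\nabla F_k\|_2^2=\sum_k\|\nabla F_k\|_2^2+\sum_{k\neq j}\nabla F_k^{T}\nabla F_j$. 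So you have not missed an idea; you have located precisely the step the paper performs without justification: in its final chain of inequalities it silently replaces $\mathbb{E}\big[\|\sum_{k=1}^{K^t}\nabla F_k\|_2^2\big]$ by $\sum_{k=1}^{K^t}\|\nabla F_k\|_2^2$, and as you observe this is false in general. For instance, when all participating workers hold the same model (as at $t=0$, where every $\tau_k=0$ and $\mathbf{w}_k^{0,0}=\mathbf{w}^0$), the left side equals $(K^t)^2\|\nabla F(\mathbf{w}^0)\|_2^2$ while the right side is only $K^t\|\nabla F(\mathbf{w}^0)\|_2^2$.

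Of your two proposed remedies, the looser bound is the right one. Assuming $\mathbb{E}_\xi[g_k^{T}g_j]=0$ for $k\neq j$ is not available: independence plus unbiasedness force $\mathbb{E}_\xi[g_k^{T}g_j]=\nabla F_k^{T}\nabla F_j$, so decorrelation would require the full gradients at distinct workers to be orthogonal, which there is no reason to expect. The honest estimate is $\|\sum_k\nabla F_k\|_2^2\le K^t\sum_k\|\nabla F_k\|_2^2$, which turns the claimed constant $\tfrac{M_G}{m}+1$ into $\tfrac{M_G}{m}+K^t$. Propagated into step (c) of the proof of Theorem~\ref{theo1}, this changes the coefficient of $\sum_k\sum_u\mathbb{E}\big[\|\nabla F(\mathbf{w}_k^{t-\tau_k,u})\|_2^2\big]$ so that the required learning-rate condition becomes $\eta\le 1/\big(L(\tfrac{M_G}{mK^t}+1)\big)$ rather than $\eta\le 1/\big(L(\tfrac{M_G}{mK^t}+\tfrac{1}{K^t})\big)$; the $\sigma^2$ term and hence the final rate in Theorem~\ref{theo1} and Corollary~\ref{coll1} are unaffected. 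That is the repair the lemma actually needs.
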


Based on our assumptions and lemmas, we have the following theorem and corollary for convergence. 
\begin{theorem} \label{theo1}For non-convex objective function $F(\cdot)$, where $F^*=\min_{\mathbf{w}} F(\mathbf{w})$, for a fixed learning rate $\eta$ we have the following ergodic convergence result for ABS:
\begin{equation}
    \frac{1}{T}\sum_{t=0}^{T-1}\mathbb{E}[\left \| 
    \nabla F(\mathbf{w}^t) \right \|_2^2] \leq \frac{2(F(\mathbf{w^0})-F^*)}{T\eta U(1-\gamma)} + \frac{L\eta \sigma^2}{K^0m(1-\gamma)}.
\end{equation}
\begin{proof}
    The proof is presented in Appendix \ref{A2}.
\end{proof}
\end{theorem}

\begin{corollary}
\label{coll1}
Following Theorem 1, with a learning rate $\eta = \sqrt{\frac{2(F(\mathbf{w}^0)-F^*)K^0 m}{TUL
\sigma^2}}$ satisfying $\eta \leq \frac{1}{L(\frac{M_G}{K^0m}+ \frac{1}{K^t})}$, the output of ABS has the following ergodic convergence rate:
\begin{equation}
    \frac{1}{T}\sum_{t=0}^{T-1}\mathbb{E}[\left \| \nabla F(\mathbf{w}^t) \right \|_2^2] \leq
    \frac{2}{1-\gamma}\sqrt{\frac{2(F(\mathbf{w}^0)-F^*) L \sigma^2}{T U K^0 m }}.\label{convergence}
\end{equation}
\begin{proof}
    The proof is presented in Appendix \ref{A3}.
\end{proof}
\end{corollary}

Corollary \ref{coll1} states that the convergence rate achieves $\mathcal{O}(1/\sqrt{T K U})$ under the given assumptions, which is consistent with the works \cite{l4,l5}.

\section{SIMULATION RESULTS}
In this section, we evaluate the effectiveness of the proposed ABS against state-of-the-art schemes including AdaSync \cite{AdaSync}, classical Local SGD \cite{l2}, and SA-AdaSync, i.e., AdaSync with SA\cite{SA} which penalizes gradients linear to their staleness. 

\subsection{Experimental Setting}

 For our experiments, we utilize the CIFAR-10 dataset, which consists of 50,000 training images and 10,000 validation images. The neural network architecture includes two convolutional layers and three fully connected layers. Each mini-batch size $B$ is set to 32, the learning rate $\eta$ is set to 0.1, and the number of local updates $U$ is set to 10.
We conduct experiments in two scenarios, one with $N=10$ and the other with $N=20$. The execution times of the workers follow a $\gamma$-distribution \cite{gamma}, which is a commonly used model for task execution times and captures the presence of stragglers.

\begin{figure}[t!]
\centering  
\subfigure[]{
\label{Fig.sub.31}
\includegraphics[width=8.4cm,height = 5.6cm]{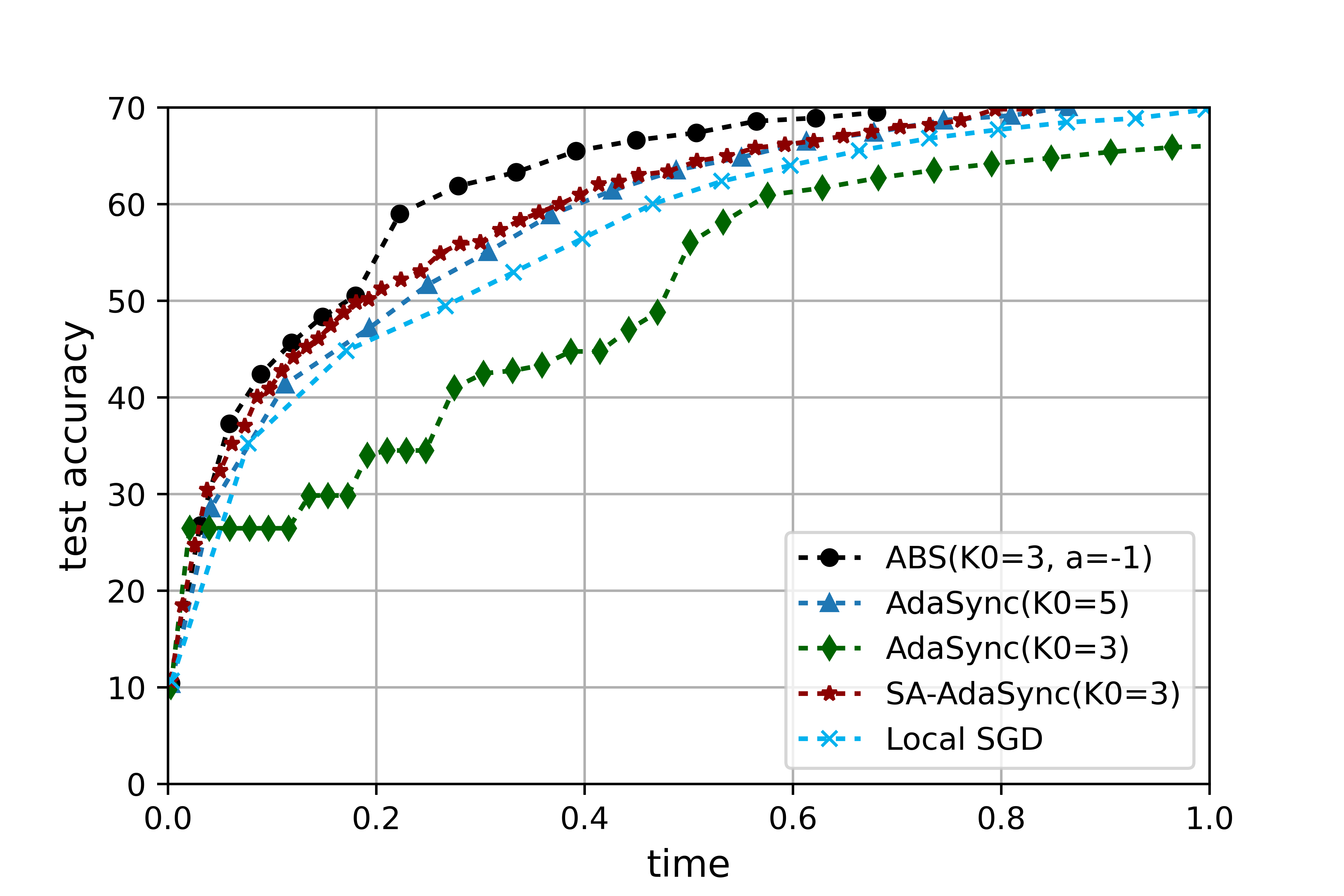}}\subfigure[]{
\label{Fig.sub.32}
\includegraphics[width=8.4cm,height = 5.6cm]{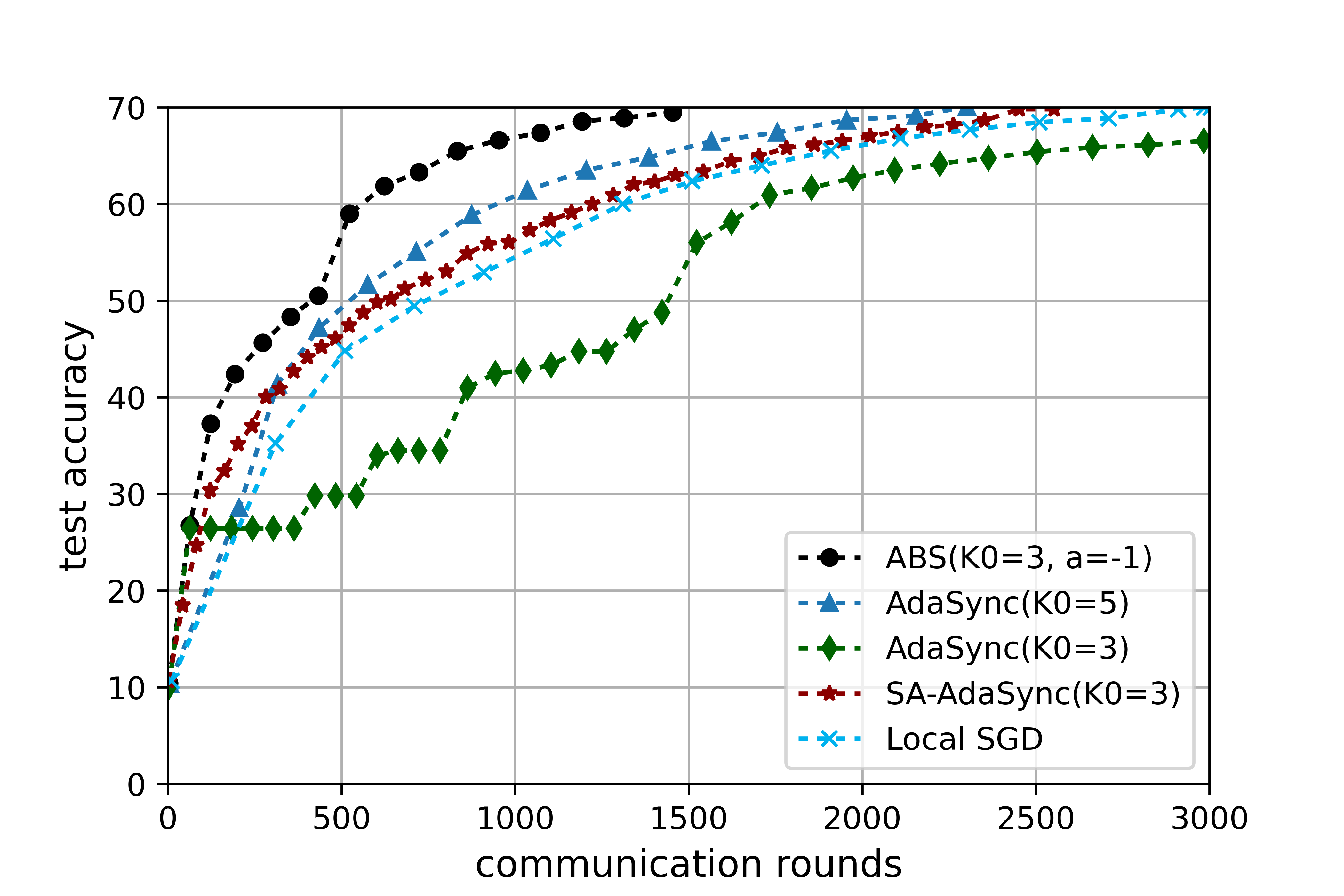}}
\caption{The performance of ABS compared to AdaSync, SA-AdaSync, and local SGD with $N=10$. (a) Learning accuracy vs. time. (b) Learning accuracy vs. communication rounds}
\label{fig1}
\end{figure}

\begin{figure}[t!]
\centering  
\subfigure[]{
\label{Fig.sub.41}
\includegraphics[width=8.4cm,height = 5.6cm]{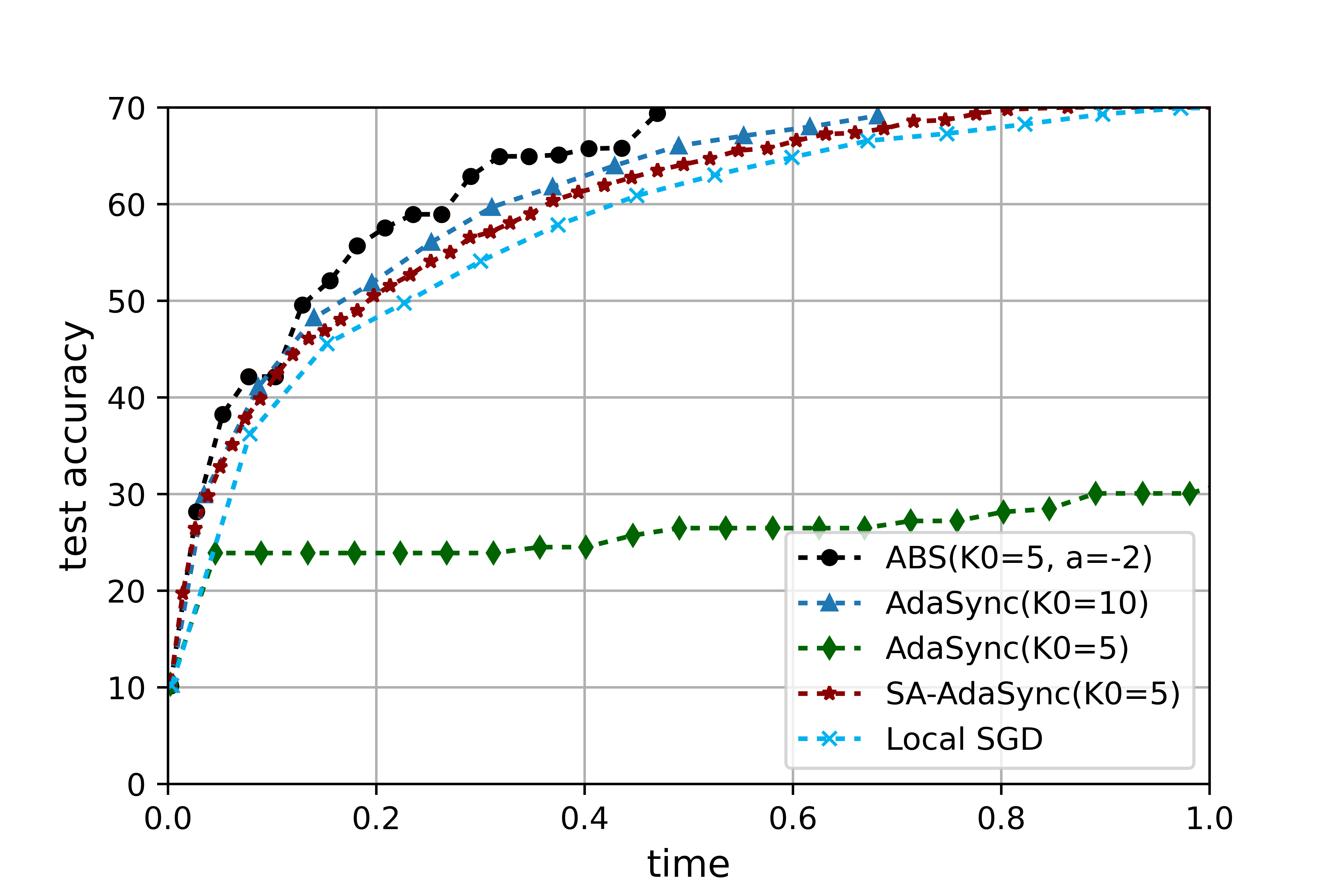}}\subfigure[]{
\label{Fig.sub.42}
\includegraphics[width=8.4cm,height = 5.6cm]{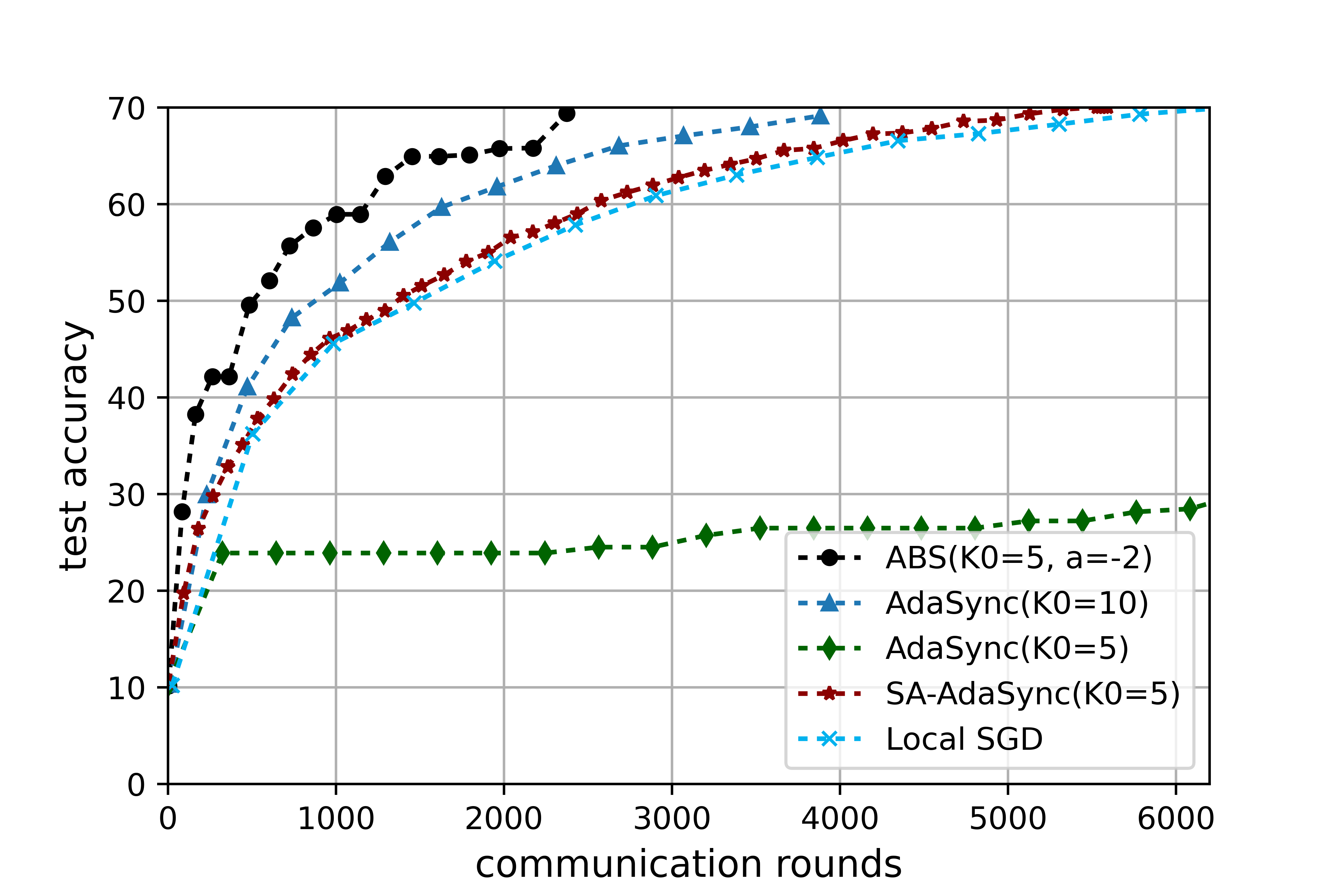}}
\caption{The performance of ABS compared to AdaSync, SA-AdaSync, and local SGD with $N=20$. (a) Learning accuracy vs. time. (b) Learning accuracy vs. communication rounds}
\label{fig2}
\end{figure}

\subsection{Performance Analysis}
\textbf{ABS Outperforms the State-of-Art Schemes.}
The performances of AdaSync, local SGD, SA-AdaSync, and ABS are evaluated in terms of the time and total communication rounds. It is worth noting that for each worker that communicates with the PS, the total communication rounds increase by 1. We consider two scenarios with $N=10$ and $N=20$, respectively.

Considering the scenario with $N = 10$, we evaluate the performance of AdaSync, local SGD, SA-AdaSync, and ABS in terms of both wall-clock time and communication rounds, as illustrated in Fig. \ref{fig1}(a) and (b), respectively. Remarkably, ABS demonstrates the fastest convergence rate while incurring the least communication cost, surpassing AdaSync even with different parameter values. This can be attributed to ABS's adoption of the bounded staleness approach, which allows for higher parallelism during the initial stages of training by starting with a smaller value of $K^0$. Additionally, by discarding some stale gradients, ABS effectively mitigates communication overload. This is further supported by ABS outperforming SA-AdaSync, which focuses on penalizing stale gradients. Thus, while there may be a loss of some gradients through discarding, the results indicate that excessively stale gradients are not essential for achieving desirable outcomes. This observation holds true even as the number of workers increases, as demonstrated in Fig. \ref{fig2}. This further supports the superiority of ABS in terms of convergence rate and communication efficiency compared to other methods.

For the scenario with $N = 10$, the impact of different parameters $a$ in ABS is illustrated in Figure \ref{fig3}. When $a$ is large, ABS gradually approaches the behavior of AdaSync, as seen with $a=0$. In this case, $\tau_{\text{max}}^t$ becomes large enough that very few stale gradients are discarded. Consequently, due to the significant gradient staleness, ABS performs similarly to AdaSync. On the other hand, when $a$ is small, such as $a=-3$, $\tau_{\text{max}}^t$ decreases to a value that is considered relatively insignificant. In this scenario, ABS discards a large number of gradients, which results in a slower convergence process. Taking both time and communication overhead into account, we conclude that $a=-2$ is the optimal choice for this particular example.

These findings emphasize the importance of appropriately selecting the parameter $a$ in ABS, as it directly impacts the staleness threshold $\tau_{\text{max}}^t$ and ultimately affects the trade-off between convergence speed and communication efficiency.

\begin{figure}[t!]
\centering  
\subfigure[]{
\label{Fig.sub.1}
\includegraphics[width=8.4cm,height = 5.6cm]{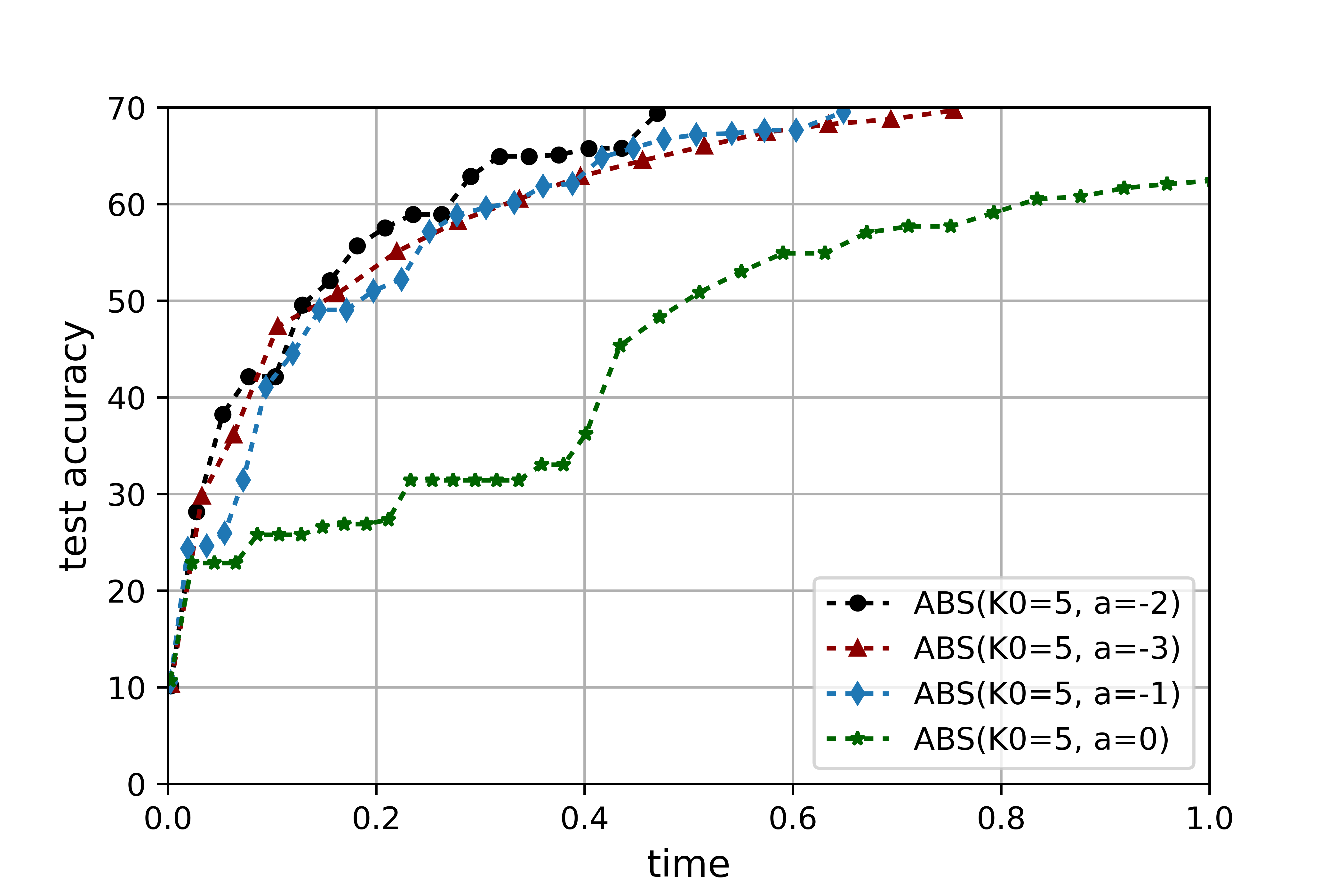}}\subfigure[]{
\label{Fig.sub.2}
\includegraphics[width=8.4cm,height = 5.6cm]{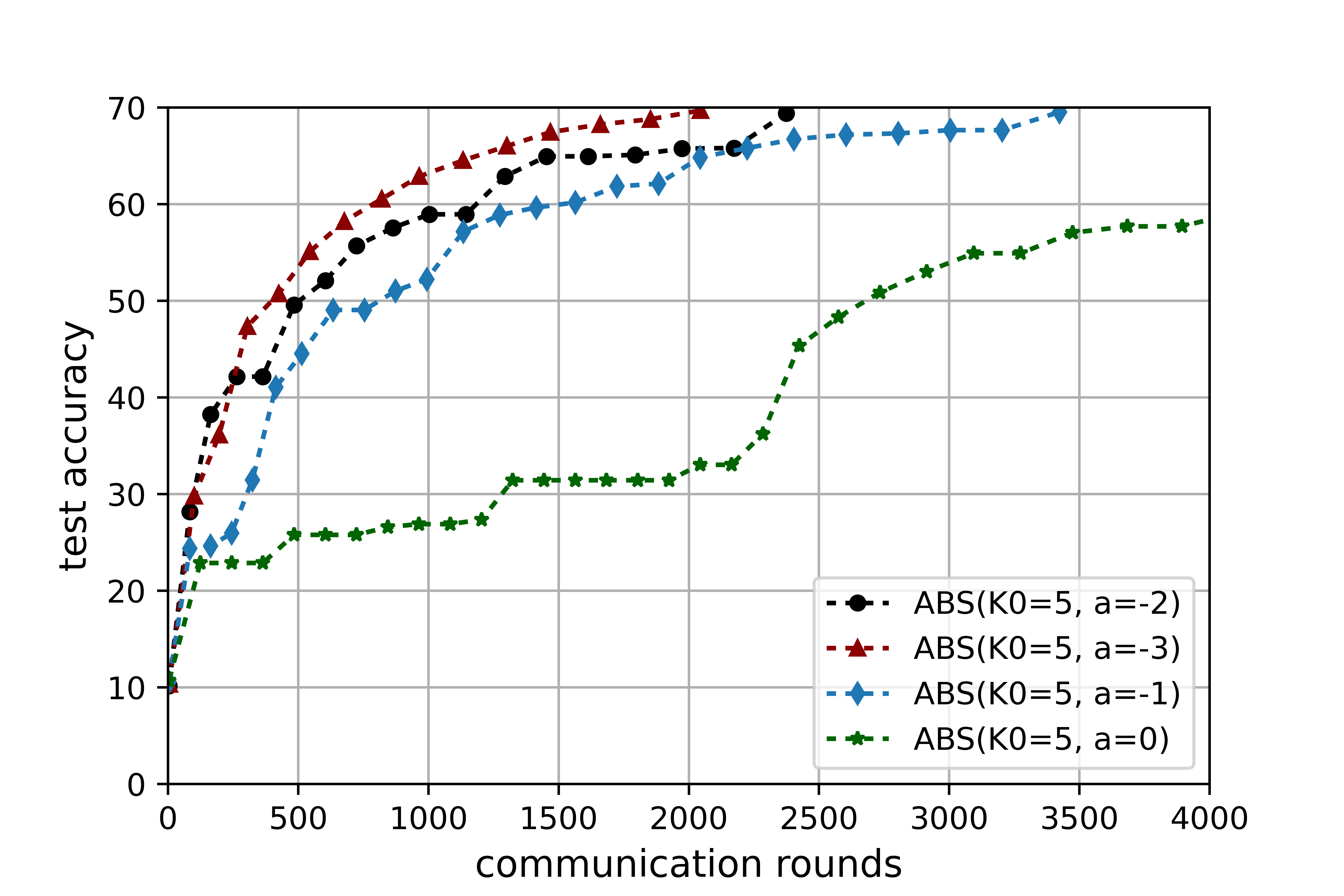}}
\caption{The performance of ABS with different parameter $a$ when $K^0 = 5$ and $N = 20$. (a) Learning accuracy vs. time. (b) Learning accuracy vs. communication rounds}
\label{fig3}
\end{figure}

\section{conclusion}

This paper has presented a novel approach, Adaptive Bounded Staleness (ABS), to enhance the training time and communication efficiency of distributed SGD while maintaining high accuracy. By leveraging adaptive thresholding and controlling gradient staleness, ABS achieves superior performance compared to the existing methods. The key idea behind ABS is to utilize relatively large gradient staleness during the early training stages to accelerate the process while receiving only gradients with small staleness later on to ensure higher accuracy. Through extensive experiments, we have confirmed the effectiveness of ABS in improving convergence rates and communication efficiency. 

Furthermore, our study explores the possibility of discarding gradients with large staleness, revealing that their impact on performance is minimal. This finding supports the effectiveness of ABS in managing gradient staleness and optimizing the training process. Overall, ABS provides a promising solution for distributed SGD, offering improved training time, communication efficiency, and high accuracy. Further research can be conducted to explore its application in other domains and to investigate additional parameter configurations for different scenarios.

\bibliographystyle{IEEEtran}
\bibliography{mybib}
\appendix
\section{Proof}
\label{appendix1}
\subsection{Proof of lemmas}
\label{A1}

\noindent Proof of \textbf{Lemma} \ref{lem1}
\begin{proof}
\begin{align}
\label{el1}
    \E{\left \|\mathbf{v}^{t,u}_k - \nabla F(\mathbf{w}^t) \right \|^2_2} &=  \E{\left \|\mathbf{v}^{t,u}_k -\nabla F(\mathbf{w}^{t-\tau_k^t,u}_k) + \nabla F(\mathbf{w}^{t-\tau_k^t,u}_k) - \nabla F(\mathbf{w}^t) \right \|^2_2} \nonumber\\
    &=\E{\left \|\mathbf{v}^{t,u}_k -\nabla F(\mathbf{w}^{t-\tau_k^t,u}_k)\right \|^2_2} +\E{\left\| \nabla F(\mathbf{w}^{t-\tau_k^t,u}_k) - \nabla F(\mathbf{w}^t) \right \|^2_2}.
\end{align}
The last line holds since the cross term is 0 as derived below.
\begin{align*}
    &\E{ (\mathbf{v}^{t,u}_k -\nabla F(\mathbf{w}^{t-\tau_k^t,u}_k))^T ( \nabla F(\mathbf{w}^{t-\tau_k^t,u}_k) - \nabla F(\mathbf{w}^t))} \nonumber\\
    &\overset{(a1)}{=} \E{\mathbb{E}_\xi\left[(\mathbf{v}^{t,u}_k-\nabla F(\mathbf{w}^{t-\tau_k^t,u}_k))^T\right]} \E{(\nabla F(\mathbf{w}^{t-\tau_k^t,u}_k) - \nabla F(\mathbf{w}^t))} \nonumber \\
    &\overset{(b1)}{=}\E{(\mathbb{E}_\xi[\mathbf{v}^{t,u}_k] - \nabla F(\mathbf{w}^{t-\tau_k^t,u}_k))^T}  \E{\nabla F(\mathbf{w}^{t-\tau_k^t,u}_k) - \nabla F(\mathbf{w}^t)}\nonumber\\
    &\overset{(c1)}{=}0
\end{align*}
Here ($a1$) and ($b1$) indicate that we first calculate the expectation for $\xi$. Step ($c1$) follows from that $\mathbb{E}_{\xi}[\mathbf{v}^{t,u}_k] =  \nabla F(\mathbf{w}^{t-\tau_k^t,u}_k)$.

Returning to (\ref{el1}), the first term can be written as, 
\begin{align}
\label{el2}
\E{\left \|\mathbf{v}^{t,u}_k -\nabla F(\mathbf{w}^{t-\tau_k^t,u}_k)\right \|^2_2}  &\overset{(a2)}{=} \E{\left \|\mathbf{v}^{t,u}_k\right \|^2_2} - 2\E{(\mathbf{v}^{t,u}_k)^T \nabla F(\mathbf{w}^{t-\tau_k^t,u}_k)} + \E{\left \| \nabla F(\mathbf{w}^{t-\tau_k^t,u}_k)\right \|_2^2} \nonumber \\
&\overset{(b2)}{=}\E{\left \|\mathbf{v}^{t,u}_k\right \|^2_2} - 2\E{\mathbb{E}_\xi[(\mathbf{v}^{t,u}_k)^T \nabla F(\mathbf{w}^{t-\tau_k^t,u}_k)]}  + \E{\left \| \nabla F(\mathbf{w}^{t-\tau_k^t,u}_k)\right \|_2^2} \nonumber \\
&\overset{(c2)}{=} \E{\left \|\mathbf{v}^{t,u}_k\right \|^2_2} - 2\E{\left \| \nabla F(\mathbf{w}^{t-\tau_k^t,u}_k)\right \|_2^2} + \E{\left \| \nabla F(\mathbf{w}^{t-\tau_k^t,u}_k)\right \|_2^2}\nonumber\\
&=\E{\left \|\mathbf{v}^{t,u}_k\right \|^2_2} - \E{\left \| \nabla F(\mathbf{w}^{t-\tau_k^t,u}_k)\right \|_2^2}. 
\end{align}
Here ($a2$) follows from the equation $||a - b||_2^2= ||a||_2^2 - 2a^Tb + ||b||_2^2$ and ($b2$) means that we first calculate the expectation for $\xi$. Step ($c2$) is due to $\mathbb{E}_\xi[\mathbf{v}^{t,u}_k] = \nabla F(\mathbf{w}^{t-\tau_k^t,u}_k)$. Bring (\ref{el2}) to (\ref{el1}), we get Lemma \ref{lem1}.  
\end{proof}

\noindent Proof of \textbf{Lemma} \ref{lem2}

\begin{proof}
First let us consider the expectation of any cross term  such that $k \neq k' $, we can get that

\begin{align}
    &\mathbb{E}_{\xi}\left[(g(\mathbf{w}^{t-\tau_k^t,u}_k,\xi^{t-\tau_k^t,u}_k)-\nabla F(\mathbf{w}^{t-\tau_k^t,u}_k))^T (g(\mathbf{w}^{t-\tau_{k'}^t,u}_{k'},\xi^{t-\tau_{k'}^t,u}_{k'})-\nabla F(\mathbf{w}^{t-\tau_{k'}^t,u}_{k'}))\right] \nonumber\\
    &\overset{(a3)}{=}\mathbb{E}_{\xi^{t-\tau_{k'}^t,u}_{k'}|\xi^{t-\tau_{k}^t,u}_{k}}\left[\mathbb{E}_{\xi^{t-\tau_k^t,u}_k}\left[(g(\mathbf{w}^{t-\tau_k^t,u}_k,\xi^{t-\tau_k^t,u}_k)-\nabla F(\mathbf{w}^{t-\tau_k^t,u}_k))^T (g(\mathbf{w}^{t-\tau_{k'}^t,u}_{k'},\xi^{t-\tau_{k'}^t,u}_{k'})-\nabla F(\mathbf{w}^{t-\tau_{k'}^t,u}_{k'}))\right]\right] \nonumber\\
    &\overset{(b3)}{=} \mathbb{E}_{\xi^{t-\tau_{k'}^t,u}_{k'}} \left[(\nabla F(\mathbf{w}^{t-\tau_k^t,u}_k) - \nabla F(\mathbf{w}^{t-\tau_k^t,u}_k))^T(g(\mathbf{w}^{t-\tau_{k'}^t,u}_{k'},\xi^{t-\tau_{k'}^t,u}_{k'})-\nabla F(\mathbf{w}^{t-\tau_{k'}^t,u}_{k'}))\right]\nonumber\\
    &= 0.
\end{align}
Here step ($a3$) means we first calculate the expectation for $\xi^{t-\tau_k^t,u}_k$, then the expectation for $\xi^{t-\tau_{k'}^t,u}_{k'}$, and ($b3$) is due to $\mathbb{E}_{\xi^{t-\tau_k^t,u}_k}\left[g(\mathbf{w}^{t-\tau_k^t,u}_k,\xi^{t-\tau_k^t,u}_k)\right] = \nabla F(\mathbf{w}^{t-\tau_k^t,u}_k)$.
Thus the cross terms are all 0. So we can get that
\begin{align}\label{eq18}
    \mathbb{E}_{\xi}\left[||\sum_{k \in \mathcal{K}^t}(g(\mathbf{w}^{t-\tau_k^t,u}_k,\xi^{t-\tau_k^t,u}_k)-\nabla F(\mathbf{w}^{t-\tau_k^t,u}_k)) ||_2^2\right]&=\sum_{k \in \mathcal{K}^t}\mathbb{E}_{\xi}\left[||(g(\mathbf{w}^{t-\tau_k^t,u}_k,\xi^{t-\tau_k^t,u}_k)-\nabla F(\mathbf{w}^{t-\tau_k^t,u}_k)) ||_2^2\right]\nonumber\\
    &\overset{(a4)}{\leq} \sum_{k \in \mathcal{K}^t}(\frac{\sigma^2}{m} + \frac{M_G}{m}|| \nabla F(\mathbf{w}^{t-\tau_k^t,u}_k) ||_2^2),
\end{align}
where ($a4$) is from assumption \ref{assum3}. Thus, we have 
\begin{align}
     &\mathbb{E}_{\xi}\left[||\sum_{k \in \mathcal{K}^t}g(\mathbf{w}^{t-\tau_k^t,u}_k,\xi^{t-\tau_k^t,u}_k) ||_2^2\right]= \mathbb{E}_{\xi}\left[||\sum_{k \in \mathcal{K}^t}(g(\mathbf{w}^{t-\tau_k^t,u}_k,\xi^{t-\tau_k^t,u}_k) -  \nabla F(\mathbf{w}^{t-\tau_k^t,u}_k) + \nabla F(\mathbf{w}^{t-\tau_k^t,u}_k))||^2_2\right]\nonumber\\
     &\overset{(a5)}{=}  \mathbb{E}_{\xi}\left[||\sum_{k \in \mathcal{K}^t}(g(\mathbf{w}^{t-\tau_k^t,u}_k,\xi^{t-\tau_k^t,u}_k) -  \nabla F(\mathbf{w}^{t-\tau_k^t,u}_k)||^2_2\right] + \mathbb{E}\left[||\sum_{k \in \mathcal{K}^t}\nabla F(\mathbf{w}^{t-\tau_k^t,u}_k)||_2^2\right] \nonumber\\
     &\overset{(b5)}{\leq} \sum_{k \in \mathcal{K}^t}(\frac{\sigma^2}{m} + \frac{M_G}{m}|| \nabla F(\mathbf{w}^{t-\tau_k^t,u}_k) ||_2^2) + \sum_{k \in \mathcal{K}^t}||\nabla F(\mathbf{w}^{t-\tau_k^t,u}_k)||_2^2\nonumber\\
     &= \frac{K^t \sigma^2}{m} + (\frac{M_G}{m} + 1)\sum_{k \in \mathcal{K}^t}||\nabla F(\mathbf{w}^{t-\tau_k^t,u}_k)||_2^2
\end{align}
Here step ($a5$) follows from that $\mathbb{E}_{\xi}[(g(\mathbf{w}^{t-\tau_k^t,u}_k,\xi^{t-\tau_k^t,u}_k) -  \nabla F(\mathbf{w}^{t-\tau_k^t,u}_k))^T\nabla F(\mathbf{w}^{t-\tau_k^t,u}_k)] = 0$,  ($b5$) is from eq. (\ref{eq18}). So we complete the proof. 
\end{proof}

\subsection{Proof of \textbf{Theorem} \ref{theo1}}
\label{A2}
\begin{proof}
Let $g(\mathbf{w}^{t-\tau_k^t,u}_k,\xi^{t-\tau_k^t,u}_k)= \frac{1}{B}\sum_{b=1}^B\nabla f(\mathbf{w}_k^{t-\tau_k^t,u}, \xi_{k,b}^{t-\tau_k^t, u})$, $\mathbf{v}^{t,u} = \frac{1}{K^t}\sum_{k \in \mathcal{K}^t} g(\mathbf{w}^{t-\tau_k^t,u}_k, \xi^{t-\tau_k^t,u}_k)$. From Lipschitz continuity, we have the following.
\begin{align}
    F(\mathbf{w}^{t+1}) &\leq F(\mathbf{w}^{t}) +     (\mathbf{w}^{t+1} - \mathbf{w}^{t})^{T} \nabla F(\mathbf{w}^{t}) + \frac{L}{2} \left \|\mathbf{w}^{t+1} - \mathbf{w}^{t} \right \|_{2}^{2}  \nonumber\\
    &= F(\mathbf{w}^{t})- \frac{\eta}{K^t} \sum_{k\in \mathcal{K}^t} \sum_{u=0}^{U-1}g(\mathbf{w}^{t-\tau_k^t,u}_k, \xi^{t-\tau_k^t,u}_k)^T \nabla F(\mathbf{w}^t) + \frac{L}{2} \left \|\eta \sum_{u=0}^{U-1}\mathbf{v}^{t,u} \right \|_{2}^{2}  \nonumber \\ 
    &\overset{(a6)}{=} F(\mathbf{w}^{t}) - \frac{\eta}{2K^t} \sum_{k\in \mathcal{K}^t} \sum_{u=0}^{U-1} \left \|\nabla F(\mathbf{w}^t)\right\|^2_2 -\frac{\eta}{2K^t} \sum_{k \in \mathcal{K}^t}^{K^t} \sum_{u=0}^{U-1} \left \| g(\mathbf{w}^{t-\tau_k^t,u}_k, \xi^{t-\tau_k^t,u}_k) \right \|^2_2  \nonumber\\
    & \qquad  + \frac{\eta}{2K^t} \sum_{k\in \mathcal{K}^t} \sum_{u=0}^{U-1}
    \left \|g(\mathbf{w}^{t-\tau_k^t,u}_k, \xi^{t-\tau_k^t,u}_k) - \nabla F(\mathbf{w}^t) \right \|^2_2 + \frac{L\eta^2}{2} \left \| \sum_{u=0}^{U-1}\mathbf{v}^{t,u} \right \|_{2}^{2}.
\end{align}
Here ($a6$) follows from $2{a}^T{b} = ||{a}||_2^2 + ||{b}||_2^2 - ||{a}-{b}||_2^2$. We remove the restriction $ \tau_k^t \leq \tau^t_{max}$, for we prove that for all cases of $\tau_k^t$, it should also be satisfied under the restriction. Taking expectation,
\begin{align}
    \E{F(\mathbf{w}^{t+1})} &\leq \E{F(\mathbf{w}^t)} - \frac{\eta U}{2}\E{\left \| \nabla F(\mathbf{w}^t)\right\|^2_2} -  \frac{\eta}{2K^t}\sum_{k\in \mathcal{K}^t}\sum_{u=0}^{U-1}\E{ \left \|g(\mathbf{w}^{t-\tau_k^t,u}_k, \xi^{t-\tau_k^t,u}_k)\right\|^2_2}  \nonumber\\
    &\qquad + \frac{\eta}{2K^t} \sum_{k \in \mathcal{K}^t}\sum_{u=0}^{U-1}
    \E{\left \|g(\mathbf{w}^{t-\tau_k^t,u}_k, \xi^{t-\tau_k^t,u}_k)-\nabla F(\mathbf{w}^t) \right \|^2_2} + \frac{L\eta^2}{2} \E{\left \| \sum_{u=0}^{U-1}\mathbf{v}^{t,u} \right  \|_{2}^{2}}  \nonumber\\
    &\overset{(a7)}{=} \E{F(\mathbf{w}^t)} - \frac{\eta U}{2}\E{\left \|  \nabla F(\mathbf{w}^t)\right\|^2_2} - \frac{\eta}{2K^t}\sum_{k\in \mathcal{K}^t}\sum_{u=0}^{U-1}\E{ \left \|g(\mathbf{w}^{t-\tau_k^t,u}_k, \xi^{t-\tau_k^t,u}_k)\right\|^2_2}  \nonumber\\
    &\qquad+ \frac{\eta}{2K^t}\sum_{k\in \mathcal{K}^t}\sum_{u=0}^{U-1}\E{ \left \|g(\mathbf{w}^{t-\tau_k^t,u}_k, \xi^{t-\tau_k^t,u}_k)\right\|^2_2} - 
    \frac{\eta}{2K^t}\sum_{k \in \mathcal{K}^t}\sum_{u=0}^{U-1}\E{ \left \|\nabla F(\mathbf{w}^{t-\tau_k^t,u}_k)\right\|^2_2}  \nonumber\\
    &\qquad+\frac{\eta}{2K^t}\sum_{k\in \mathcal{K}^t}\sum_{u=0}^{U-1}\E{ \left \|\nabla F(\mathbf{w}^t) - \nabla F(\mathbf{w}^{t-\tau_k^t,u}_k)\right\|^2_2} + \frac{L\eta^2}{2} \E{ \left \| \sum_{u=0}^{U-1}\mathbf{v}^{t,u} \right \|_{2}^{2}} \nonumber\\
    &\overset{(b7)}{\leq}\E{F(\mathbf{w}^t)} - \frac{\eta U(1-\gamma)}{2}\E{\left \|\nabla
    F(\mathbf{w}^t)\right\|^2_2} - \frac{\eta}{2K^t}\sum_{k\in \mathcal{K}^t}\sum_{u=0}^{U-1}\E{ \left \|\nabla F(\mathbf{w}^{t-\tau_k^t,u}_k)\right\|^2_2}  \nonumber\\
    &\qquad\qquad\qquad\qquad\qquad\qquad\qquad\qquad\qquad\qquad\qquad + \frac{L\eta^2}{2} \E{ \left \| \sum_{u=0}^{U-1}\mathbf{v}^{t,u} \right \|_{2}^{2}}  \nonumber\\
    &\overset{(c7)}{\leq} \E{F(\mathbf{w}^t)} - \frac{\eta U(1-\gamma)}{2}\E{\left \|\nabla F(\mathbf{w}^t)\right\|^2_2} - \frac{\eta}{2K^t}\sum_{k\in \mathcal{K}^t}\sum_{u=0}^{U-1}\E{ \left \|\nabla F(\mathbf{w}^{t-\tau_k^t,u}_k)\right\|^2_2}  \nonumber\\
    &\qquad\qquad\qquad\qquad\qquad\qquad\qquad\qquad\qquad\qquad\qquad+ \frac{L\eta^2}{2}\sum_{u=0}^{U-1} \E{ \left \| \mathbf{v}^{t,u} \right \|_{2}^{2}}  \nonumber\\
    &\overset{(d7)}{\leq} \E{F(\mathbf{w}^t)} - \frac{\eta U(1-\gamma)}{2}\E{\left \|\nabla F(\mathbf{w}^t)\right\|^2_2} - \frac{\eta}{2K^t}\sum_{k\in \mathcal{K}^t}\sum_{u=0}^{U-1}\E{ \left \|\nabla F(\mathbf{w}^{t-\tau_k^t,u}_k)\right\|^2_2}  \nonumber\\
    &\qquad\qquad\qquad\qquad + \frac{L\eta^2}{2(K^{t})^2}\sum_{u=0}^{U-1}(\frac{K^t\sigma^2}{m} + (\frac{M_G}{m} + 1)\sum_{k\in \mathcal{K}^t}\E{\left \|\nabla F(\mathbf{w}^{t-\tau_k^t,u}_k) \right \|_{2}^2})  \nonumber\\
    &\leq  \E{F(\mathbf{w}^t)} - \frac{\eta U(1-\gamma)}{2}\E{\left \|\nabla F(\mathbf{w}^t)\right\|^2_2} + \frac{L\eta^2U\sigma^2}{2K^t m}  \nonumber\\
    & \qquad\qquad\qquad\qquad - \frac{\eta}{2K
    ^t}\sum_{u=0}^{U-1}\sum_{k\in \mathcal{K}^t}(1-L\eta(\frac{M_G}{K^t m} + \frac{1}{K^t}))\E{\left \| \nabla F(\mathbf{w}^{t-\tau_k^t,u}_k) \right\|^2_2}  \nonumber\\
    &\overset{(e7)}{\leq} \E{F(\mathbf{w}^t)} - \frac{\eta U(1-\gamma)}{2}\E{\left \|\nabla F(\mathbf{w}^t)\right\|^2_2} + \frac{L\eta^2U\sigma^2}{2K^t m}   
\end{align}

Here step ($a7$) follows from Lemma \ref{lem1} and step ($b7$) follows from assumption \ref{assum4} that
\begin{equation}
    \mathbb{E}\left[\left \| \nabla F(\mathbf{w}^t)- \nabla F(\mathbf{w}_{k}^{t-\tau_k^t, u}) \right \|_2^2\right] \leq \gamma \mathbb{E}\left[\left \| \nabla F(\mathbf{w}^t)\right \|_2^2\right]\nonumber
\end{equation}

for some constant $\gamma < 1$. Step ($c7$) follows from the equation $\E{||\sum_i x_i||_2^2} \leq \sum_i \E{||x_i||_2^2} $ . Step ($d7$) follows from Lemma \ref{lem2} and step ($e7$) follows from choosing $\eta < \frac{1}{L(\frac{M_G}{m K^t} + \frac{1}{K^t})}$.

Then after re-arrangement, we obtain the following: 
\begin{align}
   \E{\left \|\nabla F(\mathbf{w}^t)\right\|^2_2} \leq \frac{2(\E{F(\mathbf{w}^{t})} - \E{F(\mathbf{w}^{t+1})})}{\eta U (1-\gamma)} + \frac{L\eta\sigma^2}{K^t m(1-\gamma)}
\end{align}

Taking summation from $t = 0$ to $t = T-1$, we get,
\begin{align}
\label{t1}
    \frac{1}{T}\sum_{t=0}^{T-1}\mathbb{E}\left[\left \| 
    \nabla F(\mathbf{w}^t) \right \|_2^2\right] &\leq \frac{1}{T}\sum_{t=0}^{T-1}\frac{2(\E{F(\mathbf{w}^{t})} - \E{F(\mathbf{w}^{t+1})})}{\eta U (1-\gamma)} + \frac{1}{T}\sum_{t=0}^{T-1}\frac{L\eta\sigma^2}{K^t m(1-\gamma)} \nonumber\\
    &\overset{(a8)}{\leq} \frac{2(\E{F(\mathbf{w^0})}-\E{F(\mathbf{w}^t)})}{T\eta U(1-\gamma)} + \frac{1}{T}\sum_{t=0}^{T-1}\frac{L\eta \sigma^2}{K^0m(1-\gamma)} \nonumber \\
    &\overset{(b8)}{\leq} \frac{2(F(\mathbf{w^0})-F^*)}{T\eta U(1-\gamma)} + \frac{L\eta \sigma^2}{K^0m(1-\gamma)}.
\end{align}

Here ($a8$) follows from $K^0 \leq K^t$ as $K^t$ is increasing, and step ($b8$) follows since we assume $\mathbf{w}^0$ to be known and also from $\E{F(\mathbf{w}^t)} \geq F^*$. Now we complete the proof.
\end{proof}

\subsection{Proof of Corollary \ref{coll1}}
\label{A3}
\begin{proof}
We assume $\eta < \frac{1}{L(\frac{M_G}{m K^t} + \frac{1}{K^t})}$, and let $f(\eta) = \frac{2(F(\mathbf{w^0})-F^*)}{T\eta U(1-\gamma)} + \frac{L\eta \sigma^2}{K^0m(1-\gamma)}$. Then
\begin{align}
    \frac{d f(\eta)}{d \eta} &= -\frac{2(F(\mathbf{w^0})-F^*)}{T\eta^2 U(1-\gamma)} + \frac{L \sigma^2}{K^0m(1-\gamma)}.
\end{align}
Make $\frac{d f(\eta)}{d \eta} = 0$, we get that
\begin{equation}
    \eta' = \sqrt{\frac{2(F(\mathbf{w}^0)-F^*)K^0 m}{T U L\sigma^2}}.
\end{equation}
Note that when $T\rightarrow \infty$, $\eta' \rightarrow 0$ can always satisfy $\eta' < \frac{1}{L(\frac{M_G}{m K^t} + \frac{1}{K^t})}$. Bring $\eta'$ to (\ref{t1}), we get that
\begin{equation}
    \frac{1}{T}\sum_{t=0}^{T-1}\mathbb{E}\left[\left \| \nabla F(\mathbf{w}^t) \right \|_2^2\right] \leq
    \frac{2}{1-\gamma}\sqrt{\frac{2(F(\mathbf{w}^0)-F^*) L \sigma^2}{ T U K^0 m }}.\nonumber
\end{equation}
\end{proof}

\end{document}